\newcommand{\mcal}{\mathcal}
\newtheorem{thm}{Theorem}
\newtheorem*{defin*}{Definition}
\algnewcommand\algorithmicforeach{\textbf{for each}}
\begin{document}

\title{Secure Communication over \\ 1-2-1 Networks} 
\author{
\IEEEauthorblockN{Gaurav Kumar Agarwal$^\star$, Yahya H. Ezzeldin$^\star$, Martina Cardone$^\dagger$, Christina Fragouli$^\star$ }
$^\star$ University of California Los Angeles, Los Angeles, CA 90095, USA\\
Email: \{gauravagarwal, yahya.ezzeldin, christina.fragouli\}@ucla.edu \\
$^\dagger$ University of Minnesota, Minneapolis, MN 55404, USA, Email: cardo089@umn.edu
}
\IEEEoverridecommandlockouts

\maketitle

\begin{abstract}
This paper starts by assuming a 1-2-1 network, the abstracted noiseless model of mmWave networks that was shown to closely approximate the Gaussian capacity in~\cite{yahya},  and studies secure communication.
First, the secure capacity is derived for 1-2-1 networks where a source is connected to a destination through a network of unit capacity links. 
Then, lower and upper bounds on the secure capacity are derived for the case when source and destination have more than one beam, which allow them to transmit and receive in multiple directions at a time. 
Finally, secure capacity results are presented for diamond 1-2-1 networks when edges have different capacities.
\end{abstract}

\section{Introduction}
High-frequency communication, such as mmWave and Thz, can enable multi-gigabit communication, albeit at relatively short range, and with the help of beamforming to compensate for high path loss. To cover large areas, such as commercial buildings, requires deploying networks of relays that communicate through directional beams. In~\cite{yahya}, the authors derived a model for high-frequency communication networks, that they termed Gaussian 1-2-1 networks, and presented capacity results as well as information flow algorithms. {In this paper, we start by assuming a 1-2-1 network, namely the abstracted noiseless model of mmWave networks that was shown to closely approximate the Gaussian capacity in \cite{yahya} and, study secure message communication over such networks.}
%

The 1-2-1 model abstracts directivity:  to establish a communication link, both the mmWave transmitter and receiver employ antenna arrays that they electronically steer to direct their beams towards each other - termed as 1-2-1 link, as both nodes need to focus their beams to face each other for the link to be active. Thus, in 1-2-1 networks, instead of broadcasting or interference, we have coordinated steering of transmit and receive beams to activate different links at each time. 



    We now review a fundamental result in network security, and then discuss how it changes over 1-2-1 networks.  Consider a source, Alice, connected to a destination, Bob, through an arbitrary traditional network represented as a graph with unit capacity lossless links, and assume that the min-cut between the source and the destination equals $H$. That is, we can find $H$ edge-disjoint unit capacity paths that connect the source to the destination. Assume that a passive eavesdropper, Eve, wiretaps any $K$ links of the communication network. Alice can then securely (in the strong information theoretical sense) communicate at rate $H-K$ with the destination, by conveying linear combinations of  $K$ keys with $H-K$ information messages \cite{cai2002secure}. The rate $H-K$ is exactly the secure message capacity\footnote{This holds under some standard assumptions in the literature \cite{cai2002secure}, in particular under the assumption that only Alice can generate randomness.} - we cannot hope to do better.
 
    In 1-2-1 networks of unit capacity edges, it turns out that even if the 1-2-1 {min-cut} is $H$, i.e., the maximum flow using mmWave communication is $H$, and Eve eavesdrops any $K$ edges, we may be able to securely communicate at rates higher than $H-K$.  Consider for example a diamond network with $N$ relays shown in Fig.~\ref{fig:diamond_unequal} with all edges of unit {capacity:} the unsecure communication capacity equals one - we cannot do better than rate one because Alice can beamform and transmit information at only one relay at each time, and it does not matter which relay she communicates with, since we assumed that all links have unit capacity.
    Assume that  Eve wiretaps any one edge. That is, we have $H=1$ and $K=1$, which over traditional networks would result to a zero secure communication rate.  However, Alice can vary which relay she communicates with over time; in fact, she can devote a fraction $\frac{1}{N}$ of her time 
 to send information to Bob over any one out of the $N$ unit capacity paths that connect them. Because Eve will only be observing one of these paths, as we formally show in Section~\ref{sec:unit_edge}, 
{Alice can securely communicate at rate} of $1-\frac{1}{N}$. This is closer to the unsecure communication rate of one, than to zero. That is,  {for security} over 1-2-1 networks, {we can leverage} the fact that we may have many possible choices of paths to achieve the unsecure capacity, to {communicate at} rates much higher than $H-K$.

\noindent{\bf{Main Contributions.}}
(a) We consider arbitrary 1-2-1 networks with unit capacity edges, {where} Eve wiretaps any $K$ edges, and derive lower and upper bounds on the capacity, that are tight for some networks.
(b) We derive the secure message capacity for the case where the source is connected to the destination through {one layer of non-interfering relays (i.e., diamond network)}, where now each path from the source to the destination can have arbitrary capacity.

\noindent{\bf{Related Work.}}
In our work, we essentially leverage directivity and multipath for security, over a ``lossless'' network model. 
The fact that directivity can help with security has been observed {in the context of} MIMO beamforming, see~\cite{mimo_beamform} and later {work~\cite{iris};} {in these works,} the main observation is that, by creating a narrow beam, we limit the locations where the adversary Eve can collect useful information - or at least, significantly weaken her channel, so as to utilize wiretapping coding. 
{However, to the best of our knowledge,}
these ideas have not been extended to networks. Exploiting multipath for security {over} lossless networks with unit {capacity} links has notably been used in secure network {coding~\cite{cai2002secure}.} This was followed by a number of works such as~\cite{khaleghi2009subspace,langberg,jaggi2005polynomial,Agarwal2017}.  For edges with {non-uniform} capacities, Cui et al.~\cite{cui2013secure} designed a secure achievable scheme. 
  These results however, consider only the ``traditional network'', where a node can communicate to other nodes using all the edges it is connected with, and not 1-2-1 networks where a node with one beam can only transmit to one among its neighbors at each point in time.
  
  \noindent{\bf{Paper Organization.}} 
Section~\ref{sec:model} presents  the {1-2-1} network model and results on the unsecure capacity for arbitrary networks with unit edge capacities. Section~\ref{sec:unit_edge} contains secure capacity results for arbitrary networks with unit edge capacities and Section~\ref{sec:diamond} presents our secure capacity results for diamond networks with arbitrary edge capacities. 
{Section~\ref{sec:Concl} concludes the paper.}
%


\section{System Model {And} Unsecure Capacity}
\label{sec:model}
\noindent\textbf{Notation.} $[m]: = \{1, \ 2, \ \ldots, \ m \}$, $[a : b]$ is the set of integers from $a$ to $b \geq a$ {and}
$A_{[m]} = \{A_1, \ A_2, \ A_3, \ \ldots, \ A_m\}$.

\smallskip

\noindent
    {\bf 1-2-1 Model.}
{The work in \cite{yahya} examined capacity characterizations for unsecure communication, and showed that we can approximately achieve the capacity of a Gaussian mmWave network within a constant gap, by considering instead of the underlying Gaussian network, a lossless network that was termed  1-2-1 network model. In this paper, we examine security over such 1-2-1 networks, that we  describe next.}
We consider a source connected to {a destination} through a directed acyclic graph  $G = (V,E)$  with edges of fixed finite capacities, {where} each link can be activated according to the 1-2-1 constraints. That is, at any particular time, an intermediate node can simultaneously receive and transmit but it can at most listen to one node (one incoming edge) and direct its transmission to one node (one outgoing edge) in the network. 
The source (respectively, destination) can transmit to ({respectively,} receive from) $M$ other nodes i.e., on $M$ outgoing edges ({respectively,} on $M$ incoming edges), simultaneously with no interference.

\smallskip

\noindent
{\bf Adversary Model and Security.}
We assume that the source wishes to communicate a message $W$ of entropy rate $R$ securely from a passive {external} adversary Eve who can wiretap any $K$ edges of her choice. 

If Eve wiretaps edges in the set $S \subseteq E, \ |S| = K$, and  the symbols transmitted on these edges over $n$ network uses are denoted by $\{T^n_{e}, \ e \in S \}$, then we require that:
\begin{align}
I(W; \{T^n_{e}, \ e \in S \} ) & \leq \epsilon, \ \forall S \subseteq E, \ |S| = K \label{eq:security} .
\end{align}
We are interested in {characterizing} the secure message capacity $C$, using the standard {definition} of the maximum rate at which the source can communicate with the destination under~\eqref{eq:security}.

\smallskip

\noindent \textbf{Unsecure Capacity:} 
{Here, we derive the capacity {in the absence of the eavesdropper Eve.} 	1-2-1 {networks} with arbitrary edge capacities, and $M=1$, under Gaussian channel models are analyzed in~\cite{yahya}, where {the main result} is that {over such networks, one can} \textbf{approximately} (i.e., up to a gap that only depends on $N$) achieve {the capacity} {by routing information} across paths; moreover, out of an exponential number {(in $N$)} of paths that potentially connect the source to the destination, capacity can be achieved by utilizing at most a linear number (in $N$) of them. In this section, we derive an additional results, namely the exact capacity for any $M$ when all the edges are of unit capacity.}

	\begin{thm}
		For arbitrary 1-2-1 networks with unit capacity edges, the capacity in absence of {Eve} is given by,
		\begin{align}
		C_u = \min(M, H_v),
		\end{align}where $H_v$ is the maximum number of vertex disjoint paths in the network.
	\end{thm}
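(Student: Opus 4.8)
The plan is to establish the two matching inequalities $C_u \le \min(M,H_v)$ (converse) and $C_u \ge \min(M,H_v)$ (achievability) separately. The guiding observation is that for \emph{unsecure} communication over a directed acyclic 1-2-1 network, routing is sufficient, so the problem collapses to a combinatorial flow question in which the role of the 1-2-1 constraint is to impose unit throughput at each relay.

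For achievability, I would invoke the definition of $H_v$ to extract $H_v$ internally vertex-disjoint source-to-destination paths and retain $\min(M,H_v)$ of them. On each retained path I route one symbol per channel use. I then check that this schedule is 1-2-1 admissible: since the paths are vertex disjoint, every intermediate node lies on at most one of them and hence listens to a single incoming edge and transmits on a single outgoing edge; the source activates exactly $\min(M,H_v)\le M$ outgoing beams and the destination exactly $\min(M,H_v)\le M$ incoming beams, both within the $M$-beam budget. This delivers $\min(M,H_v)$ distinct symbols per use with no rate loss, giving the lower bound.

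For the converse I would prove $C_u \le M$ and $C_u \le H_v$. The bound $C_u\le M$ is immediate: at each use the source emits symbols on at most $M$ unit-capacity edges, so over $n$ uses it injects at most $Mn$ symbols, and a Fano/cut-set argument forces $R\le M$. The bound $C_u\le H_v$ is the crux. I would first use the 1-2-1 constraint to argue that each intermediate node has unit throughput: at any use it listens to a single incoming unit-capacity edge, so it forwards at most one symbol per use. Taking a minimum $s$–$t$ vertex cut $U$, whose cardinality equals $H_v$ by Menger's theorem, every symbol observed by the destination is a function of the symbols transmitted by the nodes of $U$, and these number at most $|U|=H_v$ per use; a cut-set bound then yields $H(W)\le H_v\,n + n\epsilon_n$, i.e.\ $R\le H_v$. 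Combining the two directions gives $C_u=\min(M,H_v)$.

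The main obstacle I anticipate is the converse step $C_u\le H_v$: one must carefully translate the per-time simultaneity constraint into a per-node unit-throughput bound, and then formalize, via a genuine cut-set inequality across a minimum vertex cut, that all source-to-destination information is funnelled through at most $H_v$ unit-capacity relays. An equivalent and arguably cleaner route is to split each intermediate node $v$ into $v_{\mathrm{in}},v_{\mathrm{out}}$ joined by a unit-capacity edge and attach a super-source and super-sink through edges of capacity $M$; then $C_u$ equals the max flow of this standard unit-edge network, whose min cut is exactly $\min(M,H_v)$, recovering both directions simultaneously. The delicate point there is to justify that the reduction preserves capacity, i.e.\ that the time-averaged 1-2-1 node constraint is faithfully captured by the single unit internal edge.
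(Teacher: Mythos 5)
Your proposal is correct and follows essentially the same route as the paper: achievability by simultaneously operating $\min(M,H_v)$ vertex-disjoint paths, and a converse combining the $M$-beam bound with a Fano/cut-set argument over a minimum vertex cut of size $H_v$ (Menger), using the 1-2-1 constraint to limit each cut node to one transmitted symbol per use. The only detail the paper adds is inserting a virtual relay on any direct source--destination edge so that the vertex-cut/Menger step is well defined; the node-splitting reduction you sketch as an alternative is in fact the construction the paper itself employs in its appendix.
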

	\begin{IEEEproof}
	\textbf{Achievability:}
		Let {$p_{[H_v]}$} be {the} $H_v$ vertex disjoint {paths.} The fact that paths are vertex disjoint is {crucial under the 1-2-1 constraints. This is because intermediate nodes} can transmit and receive from only one node each, and this ensures that multiple paths (depending on {the} number of source and destination beams) can be {simultaneously} operated at each time. We pick $\min(M, H_v)$ such paths and use {these} for the transmission and thus achieve a rate of $\min(M,H_v)$.		
	\\
	\textbf{Outer Bound:} {Whenever} there are direct edges from the source to the destination, we add a virtual node in between, so that {a} direct edge {turns} into a {two-hop} path. 
This does not change {the transmission rate} as if there was a transmission {on the direct edge in $G$,} it can also be {performed} using the added {virtual} node with no extra resources.  Thus, we can assume that there are no direct edges from the source to the destination.
	
Now, {we consider} the minimum vertex cut of the network, i.e., the minimum number of vertices (excluding {the} source and {the} destination), {such that when we remove them} there is no path from the source {to} the destination. 
This minimum number of vertices {is} equal to the maximum number of vertex disjoint paths, i.e., $H_v$. {We denote} these vertices as $V_1, \ V_2, \ , \ldots, \ V_{H_v} $. {Each} of these intermediate nodes can transmit only on one of {its} outgoing edges. {We} denote the symbols transmitted on the outgoing edges of these nodes over $n$ network {uses as} $T^n_{V_{[H_v]}}$, where $T^n_{V_i}$ denotes the symbols transmitted by vertex $V_i$. {We represent the} symbols received by the destination {as} $T^n_{D}$.
	
	By Fano's inequality, {we obtain}
	\begin{align*}
	n R & \leq H(W) \stackrel{(a)}{=} H(W)- H(W|T^n_{D}) \\
	& \stackrel{(b)}{\leq} H(W) - H(W|T^n_{V_{[H_v]}}) \\
	& = I(W; T^n_{V_{[H_v]}}) \leq H(T^n_{V_{[H_v]}}) \stackrel{(c)}{\leq} n H_v. \\
	nR & \leq H(W)- H(W|T^n_{D})  = I (W; T^n_{D}) \\
	& \leq H(T^n_D) \stackrel{(d)}{\leq} M n. \\
	R & \leq \min(M, H_v), 
	\end{align*}
	where $(a)$ is due to {the} reliable decoding {constraint;} $(b)$ {follows from the `conditioning does not increase the entropy' principle and since} $V_{[H_v]}$ {is} a vertex cut and thus all {the} information going to the destination passes through {these vertices (i.e., $T^n_{D}$ is a deterministic function of $T^n_{V_{[H_v]}}$);} $(c)$ is because there are $H_v$ symbols for every instance and there are $n$ such instances; and $(d)$ holds because the destination can receive only on $M$ incoming edges from $M$ nodes.	     	
	\end{IEEEproof}

\section{Arbitrary Networks with Unit Link Capacity}
\label{sec:unit_edge}
{In this section, we}
prove lower and upper bounds on the secure capacity.

\begin{thm}\label{thm:1}
	Consider an arbitrary 1-2-1 network with unit capacity edges.
	\begin{itemize}
		\item[(a)] For $M=1$: If $H_e$ is the maximum number of \textbf{edge disjoint} paths connecting the source to the destination on the underlying graph, then the 1-2-1 secure capacity $C$ can be lower bounded as follows:
		\begin{align}
		C \geq  \left(1-  \frac{K}{H_e} \right).
\label{eq:lbM1}
		\end{align}
		\item[(b)] For $M>1$: If $H_v$ is the maximum number of \textbf{vertex disjoint} paths connecting the source to the destination on the underlying graph, then the 1-2-1 secure capacity $C$ can be lower bounded as follows:
		\begin{align}
		C \geq \min(M, H_v) \left(1-  \frac{K}{H_v} \right).
\label{eq:lbMgreat1}
		\end{align}
	\end{itemize}
\end{thm}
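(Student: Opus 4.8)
The plan is to prove both lower bounds by achievability, layering a secure network coding stage on top of a path-routing stage that exploits the freedom of the 1-2-1 schedule. The guiding idea, already visible in the diamond example of the introduction, is that each disjoint source-to-destination path behaves like a private parallel link: because an eavesdropped edge lies on at most one such path, wiretapping $K$ edges can expose the content of at most $K$ of these virtual links. I would therefore reduce each part to the classical secure network coding setup of \cite{cai2002secure} on a bundle of parallel links, and then read off the rate.

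For part (a) I would first fix $H_e$ edge-disjoint paths $p_1,\dots,p_{H_e}$ from the source to the destination. Since $M=1$, the source can feed only one path per channel use, so I would schedule the paths round-robin over blocks of $H_e$ rounds, activating exactly one path per round and relaying its field symbol end-to-end; over one block each path delivers one symbol to the destination. I would then apply the Cai--Yeung linear precoding to the $H_e$ symbols carried by a block: encode $H_e-K$ message symbols together with $K$ independent uniform key symbols through a code (equivalently, the parity structure of an MDS code) chosen so that any $K$ of the $H_e$ coded symbols are statistically independent of the message. Because the paths are edge-disjoint, any single wiretapped edge reveals only the symbol of the unique path it belongs to, so the $K$ wiretapped edges jointly reveal a function of at most $K$ path-symbols, and the Cai--Yeung guarantee then gives $I(W;\{T^n_e\}_{e\in S})=0$, establishing \eqref{eq:security}. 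The secure rate is $H_e-K$ message symbols per block of $H_e$ rounds, i.e. $1-K/H_e$ per channel use, which is \eqref{eq:lbM1}.

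For part (b) I would repeat the construction with $H_v$ vertex-disjoint paths. Vertex-disjointness is what makes the routing feasible under the 1-2-1 relaying constraint: each intermediate node lies on a single path and hence forwards only one symbol, never violating its single-in/single-out restriction. With $M>1$ beams the source and destination can drive $\min(M,H_v)$ paths at once, so I would cyclically activate $\min(M,H_v)$ paths per round; amortized over long blocks, delivering all $H_v$ coded symbols costs $H_v/\min(M,H_v)$ rounds per block. Applying the same $K$-key Cai--Yeung code to the $H_v$ symbols, and using that vertex-disjoint paths are in particular edge-disjoint so Eve again sees at most $K$ path-symbols, yields a secure rate of $H_v-K$ message symbols per $H_v/\min(M,H_v)$ rounds, i.e. $\min(M,H_v)\,(1-K/H_v)$, which is \eqref{eq:lbMgreat1}.

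The main obstacle I anticipate is not the rate arithmetic but making the two reductions airtight. First, I must verify the schedule really respects the 1-2-1 constraints at every node and beam simultaneously; this is where edge-disjointness in part (a) and vertex-disjointness in part (b) are essential, and where the $\min(M,H_v)$ cap on simultaneously active paths enters. Second, I must argue precisely that Eve's observation on $K$ arbitrary edges collapses to a function of at most $K$ of the coded path-symbols so that the Cai--Yeung independence property transfers to the full network observation; handling edges that lie on none of the chosen paths, which carry no message-dependent content, is a minor but necessary case. Finally, the Cai--Yeung code attains \emph{perfect} secrecy ($I=0$) only when the field is large enough for the required linear-independence conditions across all size-$K$ observation patterns to hold, so I would take the field size larger than the number of such patterns, making \eqref{eq:security} hold with $\epsilon=0$.
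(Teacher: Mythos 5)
Your part (a) is correct and essentially identical to the paper's own proof: the same round-robin schedule over the $H_e$ edge-disjoint paths (one path per network use, $M=1$) and the same $K$-key MDS precoding, with the same accounting that Eve's $K$ edges lie on at most $K$ of the chosen paths and hence reveal at most $K$ coded symbols. Your part (b), however, reaches the rate in \eqref{eq:lbMgreat1} by a genuinely different construction. The paper couples the code to the schedule combinatorially: with $\hat{M}=\min(M,H_v)$, it uses the network $\binom{H_v}{\hat{M}}$ times, once per $\hat{M}$-subset of the $H_v$ vertex-disjoint paths, sends $\hat{M}\binom{H_v}{\hat{M}}$ fresh coded symbols built from $K\binom{H_v-1}{\hat{M}-1}$ keys via one large MDS extension, and observes that each wiretapped edge intersects exactly $\binom{H_v-1}{\hat{M}-1}$ network uses, so all of Eve's observations are key-covered; the identity $\binom{H_v-1}{\hat{M}-1}=\tfrac{\hat{M}}{H_v}\binom{H_v}{\hat{M}}$ yields the rate. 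You instead fix the symbol-to-path assignment---one block of $H_v$ coded symbols protected by only $K$ keys, with symbol $i$ riding path $i$---and recover the factor $\min(M,H_v)$ purely through scheduling, activating $\hat{M}$ vertex-disjoint paths per round. Your variant needs far fewer keys per block ($K$ versus $K\binom{H_v-1}{\hat{M}-1}$), a much shorter code, and a cleaner security argument (``each wiretapped edge sees at most one symbol per block''); what it owes in return is the detail you gesture at with ``amortized'': when $\hat{M}\nmid H_v$, a single block cannot be delivered in $H_v/\hat{M}$ whole rounds, so you must interleave $\hat{M}$ independently keyed blocks over an $H_v$-round superframe (e.g., activate paths $t,\dots,t+\hat{M}-1 \pmod{H_v}$ in round $t$, with each path's $\hat{M}$ activations carrying one symbol from each interleaved block) and verify that each path still carries exactly one symbol of each block, so Eve's per-block view stays within the $K$-key budget---which it does. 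The paper's subset-cycling sidesteps this divisibility issue entirely at the price of heavier combinatorics; both are valid achievability proofs of the stated bounds.
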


\begin{proof}
  The main intuition {behind} the proof is that we can apply the optimal secure communication scheme we would have used on the underlying graph if we did not have the 1-2-1 constraints, and then {use} this scheme under the 1-2-1 constraints, as described in {what follows.} \\
  (a) \textbf{For $M=1$}:
		Let {$p_{[H_e]}$}
be the edge disjoint paths. 
We start by {generating} $K$ random packets and make $H_e$ linear {combinations} of these using an MDS code matrix of size $K \times H_e$. 
{We refer to} these packets as {$X_{[H_e]}$.} Any $K$ of these combinations are mutually independent. Next, we take $H_e - K$ message packets, and add {(i.e., encode)} these with {the first}
$H_e- K$ {random packets.}
{In other words, after this coding operation we obtain}
		\begin{align*}
		T_i & = \left\{ \begin{array}{cl}
		W_i + X_i  &  \text{if } i \leq H_e - K \\
		X_i & \text{else} \end{array} \right. ,	
		\end{align*}
{where $W_{[H_e-K]}$ are message packets.}
		
		We use the network $H_e$ times, and in each instance we use one of the {paths} from {$p_{[H_e]}$.} Thus, we would be able to communicate all encoded symbols in $H_e$ time instances. {Moreover,} the destination will be {able to} cancel out the keys and thereby decode $H_e - K$ messages, as there are $K$ symbols {$T_{[H_e-K+1:T_{H_e}]}$,} 
which are just {independent combinations of the} $K$ random packets we started with.
		
		Moreover, in each instance, Eve will receive a symbol if {the edges she eavesdrops} are part of the path {that is used in that particular instance.}
Since her $K$ edges can at most be part of $K$ paths, {Eve} will receive {at most} $K$ symbols, all of {which} are encoded with independent keys. Thus, the scheme securely {transmits} $H_e- K$ message packets in $H_e$ uses of the network. Hence, we get {a rate $R=\frac{H_e - K}{H_e}=1-\frac{K}{H_e}$, which is precisely the one in~\eqref{eq:lbM1}.} 
Note that security follows from the security of the underlying scheme, that is a standard scheme for multipath security.\\
{\bf (b) For $M>1$: }	Let {$p_{[H_v]}$} be the vertex disjoint {paths.} Again, the fact that paths are vertex disjoint is {crucial under the 1-2-1 constraints. This is because intermediate nodes} can transmit and receive from only one node each, and this ensures that $M$ paths can be {simultaneously} operated at each time (note {that} having vertex disjoint paths is a sufficient but not a necessary condition).

 Let $\hat{M} = \min(M,H_v)$. We start by {generating} $K {{H_v - 1}\choose{\hat{M} - 1}}$ {random} packets and extend them to $ \hat{M} {H_v  \choose\hat{M}}$ packets using an MDS code matrix. 
Then, {similar to the case $M=1$, we take the first $\hat{M} {H_v\choose\hat{M}} - K {{H_v - 1}\choose{\hat{M} - 1}}$ of these random packets and add (i.e., encode) them with the same amount of message packets.}
More formally, if $\left \{X_i, i \! \in \! \left[\hat{M} {H_v  \choose\hat{M}}\right]   \right \}$ are the random packets after the extension using {the} MDS code matrix, {and} $\left \{ W_i,  i \!\in\! \left[ \hat{M} {H_v\choose\hat{M}} \!-\! K {{H_v - 1}\choose{\hat{M} - 1}}\right] \right \}$ are the message packets, then
		\begin{align*}
		T_i & = \left\{ \begin{array}{cl} X_i + W_i & \text{if } i \leq \hat{M} {H_v\choose\hat{M}} - K {{H_v - 1}\choose{\hat{M} - 1}}  \\ X_i & \text{else} \end{array} \right. .
		\end{align*}	
		{We use} this network ${H_v \choose \hat{M} }$ times, and in each instance we use a different choice of $\hat{M}$ paths to communicate. 
{It is not difficult to see that each of the $K$  edges {eavesdropped by the} adversary will intersect with ${H_v - 1 \choose \hat{M} - 1}$ {such network uses.} This is because, for a fixed choice of edge, there are ${H_v - 1 \choose \hat{M} - 1}$ network instances where {a symbol} in carried via this edge.} Hence, in total {the} adversary will receive only  $K {H_v - 1 \choose \hat{M} - 1}$ symbols, which {are} encoded with independent keys.
The receiver, after the ${H_v \choose \hat{M} }$ {network uses} will be able to cancel out the keys. Thus, we can securely communicate $ \hat{M} {H_v  \choose \hat{M}} - K {H_v - 1 \choose \hat{M} - 1} $ over  ${H_v \choose \hat{M} }$ instances of the network, and  achieve {a rate $R$ equal to}
		\begin{align*}
		R & = \frac{\hat{M} {H_v  \choose \hat{M}} - K {H_v - 1 \choose \hat{M} - 1}}{{H_v \choose \hat{M}} }\\
		& =  \hat{M} - \frac{K \hat{M}}{H_v} \\
		& = \min(M, H_v) \left(1 -  \frac{K}{H_v} \right),
		\end{align*} 		
{which is precisely the one in~\eqref{eq:lbMgreat1}. This concludes the proof of Theorem~\ref{thm:1}.}
\end{proof}

\begin{thm} \label{thm:2}
	Let $H_e$ be the maximum number of \textbf{edge disjoint} paths connecting the source to the destination on the underlying directed graph, then the 1-2-1 secure capacity $C$ can be upper bounded as {follows:}
	\begin{align*}
	C \leq \min(M, H_e ) \left(1-  \frac{K}{H_e} \right).
	\end{align*}
\end{thm}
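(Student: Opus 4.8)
The plan is to use a minimum edge cut of the underlying graph as the single bottleneck through which all information---and all of the eavesdropper's observations---must pass. By Menger's theorem the minimum source--destination edge cut has size exactly $H_e$; fix such a cut $\mathcal{C}=\{e_1,\dots,e_{H_e}\}$ and let $T^n_{e_i}$ denote the symbols carried on $e_i$ over $n$ uses. Since, by the standing assumption, only the source injects randomness, every intermediate and terminal node applies a deterministic map to its inputs, so the destination's observation $T^n_D$ is a deterministic function of the cut symbols $T^n_{\mathcal{C}}=(T^n_{e_1},\dots,T^n_{e_{H_e}})$. Each cut edge has unit capacity, so $H(T^n_{e_i})\le n$, and the $M$-beam constraint at the destination gives $H(T^n_D)\le Mn$.

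First I would establish the ``pure cut'' bound $C\le H_e-K$, which already matches the claim when $M\ge H_e$. Starting from Fano, $nR\le I(W;T^n_D)+n\delta_n\le I(W;T^n_{\mathcal{C}})+n\delta_n$ by the data-processing inequality applied to the Markov chain $W-T^n_{\mathcal{C}}-T^n_D$. I then expand $I(W;T^n_{\mathcal{C}})$ by the chain rule under a uniformly random ordering of the $H_e$ cut edges: the first $K$ conditional terms sum to $I(W;T^n_S)$ for a uniformly random $K$-subset $S\subseteq\mathcal{C}$, which is at most $\epsilon$ by the security constraint~\eqref{eq:security}, while each of the remaining $H_e-K$ terms is bounded by the unit-capacity entropy $n$. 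Averaging over orderings yields $R\le H_e-K=H_e\!\left(1-\tfrac{K}{H_e}\right)$.

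The sharper regime $M<H_e$ is where the $M$-beam bottleneck must be combined with security, and this is the step I expect to be the main obstacle. Here I would not route through $I(W;T^n_{\mathcal{C}})$, which only sees the $H_e$ cut edges and loses the factor $M$, but instead write $nR\le I(W;T^n_D)+n\delta_n=H(T^n_D)-H(T^n_D\mid W)+n\delta_n\le Mn-H(T^n_D\mid W)+n\delta_n$. It then suffices to prove the key-entropy lower bound $H(T^n_D\mid W)\ge \tfrac{MK}{H_e}\,n-o(n)$: even after the message is revealed, the destination's received signal must retain enough randomness---the keys that any $K$ tapped edges could have exposed---to guarantee secrecy against every choice of $K$ cut edges, and this residual entropy should be at least a $K/H_e$ fraction of the destination's total received entropy $Mn$. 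Combining the two displays gives $R\le M\!\left(1-\tfrac{K}{H_e}\right)$, and taking the minimum over the two regimes yields $C\le\min(M,H_e)\!\left(1-\tfrac{K}{H_e}\right)$.

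The difficulty in this last step is that the required lower bound on $H(T^n_D\mid W)$ cannot follow from generic subset-entropy inequalities alone: applied to a parity-style (XOR-with-keys) construction, a Han/Shearer averaging over $K$-subsets fails to reproduce the right fractional count, so the bound is false for an arbitrary source. The argument must therefore exploit the structure specific to the problem---reliable decodability of $W$ from $T^n_D$ together with the requirement that secrecy hold simultaneously for all $\binom{H_e}{K}$ tapping sets---most naturally by a symmetrization that couples the destination's $M$ read positions per slot with a uniformly random $K$-subset of the cut, and then converts the per-subset constraints $I(W;T^n_S)\le\epsilon$ into the claimed bound on the key entropy visible at the destination.
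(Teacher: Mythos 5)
Your first regime ($M \ge H_e$) is sound, and your random-ordering chain-rule expansion is exactly the averaging that the paper packages as Lemma~\ref{lem:infotheory} (there exists a $K$-subset $S$ of the cut edges with $H(\{T^n_{e_i},\, i \in S^c\} \mid \{T^n_{e_i},\, i \in S\}) \le \frac{H_e-K}{H_e} H(\{T^n_{e_i},\, i \in [H_e]\})$). But in the regime $M < H_e$ your proposal has a genuine gap, which you yourself flag: the key-entropy lower bound $H(T^n_D \mid W) \ge \frac{MK}{H_e}n - o(n)$ is asserted but never proved, the symmetrization you sketch to establish it is not carried out, and you even note that the inequality fails for an arbitrary source. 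The argument therefore stalls precisely where the theorem's content lies.

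The missing idea is that you should not route through the destination's observation $T^n_D$ at all, and your claim that going through the cut ``loses the factor $M$'' is incorrect. The factor $M$ enters at the \emph{source}, not the destination: since only the source generates randomness, the cut symbols are a deterministic function of the source's transmissions $T^n_S$, and the source's $M$-beam constraint gives
\begin{align*}
H(\{T^n_{e_i},\, i \in [H_e]\}) \le H(T^n_S) \le Mn,
\end{align*}
which together with the per-edge unit capacities yields the joint bound $H(\{T^n_{e_i},\, i \in [H_e]\}) \le \min(nH_e, nM)$. With this in hand, your own averaging argument closes the proof in all regimes, provided you bound the last $H_e - K$ chain-rule terms \emph{jointly} by the conditional entropy $H(\{T^n_{e_i},\, i \in S^c\} \mid \{T^n_{e_i},\, i \in S\})$ rather than by $n$ each: Lemma~\ref{lem:infotheory} (equivalently, averaging over $K$-subsets) furnishes a subset $S$ for which this conditional entropy is at most $\frac{H_e-K}{H_e}\min(nH_e,nM)$, and the security constraint~\eqref{eq:security} applied to that $S$ gives $nR \le \epsilon + \frac{H_e-K}{H_e}\min(nH_e,nM)$, i.e., $R \le \min(M,H_e)\left(1 - \frac{K}{H_e}\right)$. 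This is the paper's route; the difficulty you anticipated with $H(T^n_D \mid W)$ simply never arises, because the correct quantity to control is the entropy crossing the cut, which the source-side beam constraint caps at $nM$.
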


\begin{proof}
	From the min-cut, max-flow theorem there {are} $H_e$ edges such {that, when removed, the} source gets disconnected from the destination. 
Let $e_1, \ e_2, \ \ldots, \ e_{H_e}$ {denote} these edges. 
{Assume that} the network is used $n$ times, and let $T^n_{e_i}, \ i \in \{1,2,\ldots, H_e\}$ be the symbols transmitted on these $H_e$ edges over $n$ uses {of the} network. 
	{By denoting} the symbols transmitted by the source on $n$ network instances by $T^n_S$, then,	
	\begin{align*}
	n M & \geq  H(T^n_S) \stackrel{(a)}{=} H(T^n_S, \{T^n_{e_i}, \ i \in [H_e] \}) \\
	& \geq H(\{T^n_{e_i}, \ i \in [H_e] \}),
	\end{align*}
	where $(a)$ follows because $\{T^n_{e_i}, \ i \in [H_e] \}$ is a deterministic function of $T^n_S$. {Moreover,} $H(\{T^n_{e_i}, \ i \in [H_e] \})  \leq n H_e$. Thus, 
\begin{align}
	H(\{T^n_{e_i}, \ i \in [H_e] \}) & \leq \min(nH_e, nM). \label{eq:min_entropy}
	\end{align}	
	
{In the remaining part of the proof, we use the result in the following lemma, which is proved}
in the Appendix. 
	
		\begin{restatable}{lem}{infotheory}
			\label{lem:infotheory}
			$\forall m,$ there {exists} a set $S \subset [L], |S| = m $, such that  $H(\{X_i, \ i \in S^c\} | \{X_i, \ i \in S\})  \leq \frac{L - m}{L} H (\{X_i, \ i \in [L]\})$.
		\end{restatable}

{Without loss of generality,} for $m = K$, {we} assume {$S = [K] \subset [H_e]$} in Lemma~\ref{lem:infotheory}. Then, by Fano's inequality, {we have}
	\begin{align*}
	n R & \leq H(W) = H(W) - H(W|\{T^n_{e_i}, \ i \in [H_e]  \}) \\
	& = I(W ; \{T^n_{e_i}, \ i \in [H_e]  \} ) \\
	&  = I (W; \{T^n_{e_i}, \ i \in [K]  \} ) + \\
	& \quad I (W; \{T^n_{e_i}, \ i \in [H_e]\setminus[K] \} | \{T^n_{e_i}, \ i \in [K]  \} )	\\ 
	&  \stackrel{(a)}{\leq} \epsilon + I (W; \{T^n_{e_i}, \ i \in [H_e]\setminus[K] \} | \{T^n_{e_i}, \ i \in [K]  \} )	\\
	& \leq \epsilon + H (\{T^n_{e_i}, \ i \in [H_e]\setminus[K] \} | \{T^n_{e_i}, \ i \in [K]  \} ) \\
	& \stackrel{(b)}{\leq}  \epsilon + \frac{H_e - K}{H_e}\min(nH_e, nM) \\
	& {\implies} R  \leq  \min(M, H_e) \left(1 - \frac{K}{H_e}\right),
	\end{align*}
	where $(a)$ {follows since, for security,} $I (W; \{T^n_{e_i}, \ i \in [K]  \} ) \leq \epsilon$ and $(b)$ is because of Lemma~\ref{lem:infotheory} and~\eqref{eq:min_entropy}. {This concludes the proof of Theorem~\ref{thm:2}.}
\end{proof}

\subsection{Discussion}
{For} some special cases, we can exactly characterize the capacity {(i.e., the upper and lower bounds previously derived coincide).} In particular, {these include:}
\begin{itemize}
\item {Networks} where the number of  edge disjoint paths is equal to the number of vertex disjoint paths. {For these networks,} the capacity if given by $C \!=\! \min(M,H_e) (1- \frac{K}{H_e})$.
  \item For {networks} where the source and the destination {have one} transmit and one receive beam each, i.e., $M=1$. {For these networks, the capacity is given by $C = 1- \frac{K}{H_e}$. }
\end{itemize}
We next provide {two} different network {examples} where: 1) the upper bound is tight {(Example~1) and 2) the outer bound is not tight, but the lower bound is tight (Example~2).}
\paragraph*{Example 1} 
In Fig.~\ref{fig:tight_arbitrary}, there are four edge disjoint paths from {the} source to the destination, {i.e., $H_e =4$.} Assume that $M=2$, {i.e.,} both the source and the destination can transmit and receive from two nodes {and $K=1$, i.e., Eve wiretaps any one edge of her choice.} 
From Fig.~\ref{fig:tight_arbitrary}, {we refer to} these four paths as $p_1, \ p_2, \ p_3 \ $ and $p_4$, {ordered} from top to bottom. 
To achieve {the outer bound,} one can first use $p_1$ and $p_4$ and then use $p_2$ and $p_3$ to {communicate} two symbols in each instance of network use. Thus, on two time instances, one can communicate $4$ {messages} ($3$ {securely since $K=1$}). This gives a secure rate of $\frac{3}{2}$, which matches the outer bound.

\begin{figure*}
	\centering
	\begin{subfigure}[b]{0.33\linewidth}
		\centering
			\begin{tikzpicture}
			\begin{scope}[scale=0.6]
			\draw (0,0) circle(3mm) node {S};
			
			\draw (2,1.5) circle(3mm);
			\draw (2,0.5) circle(3mm);
			\draw (2,-0.5) circle(3mm);
			\draw (2,-1.5) circle(3mm);
			
			\path[->] (0.3,0) edge (1.7,1.5);
			\path[->] (0.3,0) edge (1.7,0.5);
			\path[->] (0.3,0) edge (1.7,-0.5);
			\path[->] (0.3,0) edge (1.7,-1.5);
			
			\draw (4,-0.5) circle(3mm);
			\draw (4,0.5) circle(3mm);
			
			\path[->] (2.3,1.5) edge (3.7,0.5);
			\path[->] (2.3,0.5) edge (3.7,0.5);
			\path[->] (2.3,-0.5) edge (3.7,-0.5);
			\path[->] (2.3,-1.5) edge (3.7,-0.5);
			
			\draw (6,1.5) circle(3mm);
			\draw (6,0.5) circle(3mm);
			\draw (6,-0.5) circle(3mm);
			\draw (6,-1.5) circle(3mm);
			
			\path[->] (4.3,0.5) edge (5.7,1.5);
			\path[->] (4.3,0.5) edge (5.7,0.5);
			\path[->] (4.3,-0.5) edge (5.7,-0.5);
			\path[->] (4.3,-0.5) edge (5.7,-1.5);
			
			\draw (8,0) circle(3mm) node {D};
			\path[->] (6.3,1.5) edge (7.7,0);
			\path[->] (6.3,0.5) edge (7.7,0);
			\path[->] (6.3,-0.5) edge (7.7,0);
			\path[->] (6.3,-1.5) edge (7.7,0);
			\end{scope}
			\end{tikzpicture}
			\caption{~}
			\label{fig:tight_arbitrary}
	\end{subfigure}%
	\begin{subfigure}[b]{0.33\textwidth}
		\centering
		\begin{tikzpicture}
		\begin{scope}[scale=0.6]

		\draw (0,0) circle(3mm) node {S};
		
		\draw (2,1.5) circle(3mm);
		\draw (2,0.5) circle(3mm);
		\draw (2,-0.5) circle(3mm);
		\draw (2,-1.5) circle(3mm);
		
		\path[->] (0.3,0) edge (1.7,1.5);
		\path[->] (0.3,0) edge (1.7,0.5);
		\path[->] (0.3,0) edge (1.7,-0.5);
		\path[->] (0.3,0) edge (1.7,-1.5);
		
		\draw (4,-0.5) circle(3mm) node{2};
		\draw (4,0.5) circle(3mm) node {1};
		
		\path[->] (2.3,1.5) edge (3.7,0.5);
		\path[->] (2.3,0.5) edge (3.7,0.5);
		\path[->] (2.3,-0.5) edge (3.7,0.5);
		\path[->] (2.3,-1.5) edge (3.7,-0.5);
		
		\draw (6,1.5) circle(3mm);
		\draw (6,0.5) circle(3mm);
		\draw (6,-0.5) circle(3mm);
		\draw (6,-1.5) circle(3mm);
		
		\path[->] (4.3,0.5) edge (5.7,1.5);
		\path[->] (4.3,0.5) edge (5.7,0.5);
		\path[->] (4.3,0.5) edge (5.7,-0.5);
		\path[->] (4.3,-0.5) edge (5.7,-1.5);
		
		\draw (8,0) circle(3mm) node {D};
		\path[->] (6.3,1.5) edge (7.7,0);
		\path[->] (6.3,0.5) edge (7.7,0);
		\path[->] (6.3,-0.5) edge (7.7,0);
		\path[->] (6.3,-1.5) edge (7.7,0);
		\end{scope}
		\end{tikzpicture}
		\caption{~}
		\label{fig:loose_arbitrary}
	\end{subfigure}%
	\begin{subfigure}[b]{0.3\textwidth}
		\centering
		\begin{tikzpicture}
		\begin{scope}[scale=0.6]
		\draw (0,0) circle(3mm) node {S};
		
		\draw (3,1.5) circle(3mm);
		\draw (3,0.5) circle(3mm);
		\draw (3,-0.25) node{$\vdots$};
		\draw (3,-1.5) circle(3mm);
		
		\path[->] (0.3,0) edge node[pos=0.8, above] {$C_1$} (2.7,1.5) ;
		\path[->] (0.3,0) edge node[pos=0.8, above] {$C_2$} (2.7,0.5);
		\path[->] (0.3,0) edge node[pos=0.8, above] {$C_N$} (2.7,-1.5);
		
		\draw (6,0) circle(3mm) node {D};
		\path[->] (3.3,1.5) edge node[pos=0.2, above] {$C_1$} (5.7,0);
		\path[->] (3.3,0.5) edge node[pos=0.2, above] {$C_2$} (5.7,0);
		\path[->] (3.3,-1.5) edge node[pos=0.2, above] {$C_N$} (5.7,0);	
		\end{scope}
				
		\end{tikzpicture}
		\caption{~}
		\label{fig:diamond_unequal}
	\end{subfigure}%
	\caption{(a) {Network} example $H_e = 4$ {for which} the outer bound is tight for $M=2$. (b) {Network} example with $H_e = 4$ {for which the} outer bound is not tight for $M=2$.(c) Diamond network with {non-uniform} path capacities.}
	
\end{figure*}
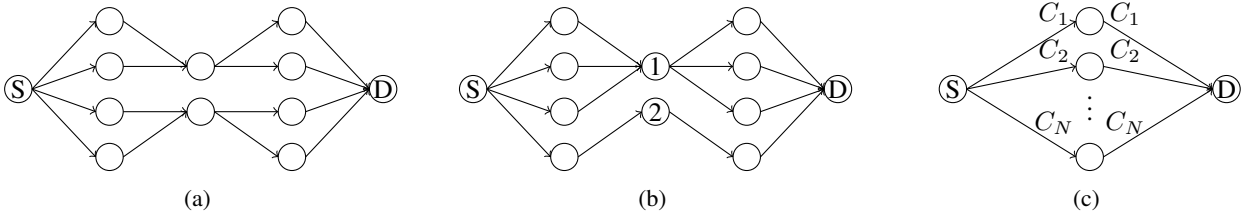

\paragraph*{Example 2}
Fig.~\ref{fig:loose_arbitrary} {has also $H_e=4$. However, for $M=2$ and $K=1$,} it can be shown that {the} secure capacity is  $1$, whereas {our outer bound in Theorem~\ref{thm:2}} is still $\frac{3}{2}$. 
{In order to achieve a secure rate of one,} we can select {the two paths on the top and on the bottom,} which are node disjoint, and use them to communicate. {We next derive an outer bound for the network in Fig.~\ref{fig:loose_arbitrary} that is tighter than the one in Theorem~\ref{thm:2}.
Assume that, at any time instant $t$, node~$1$ transmits}
symbol $X_1^{(t)}$ (it can transmit only one symbol even though it has three outgoing edges) {and node~2, transmits} $X_2^{(t)}$. 
Suppose the network is used $n$ times, then by Fano's inequality,
\begin{align*}
n R  \leq H(W) &= H(W) - H\left(W | \{X_i^{(t)}, \ i \in [2], \ t \in [n] \}   \right) \\
& =  I(W; \{X_i^{(t)}, \ i \in [2], \ t \in [n] \}) \\
& =  I (W; \{X_2^{(t)}, \ t \in [n] \}) + \\
& \quad I(W; \{X_1^{(t)}, \ t \in [n] \} | \{X_2^{(t)}, \ t \in [n] \} ) \\
& \stackrel{(a)}{\leq}  \epsilon  + n \implies R  \leq 1,
\end{align*}
where $(a)$ is because, if {Eve wiretaps the edge outgoing} from node $2$, {then} $I (W; \{X_2^{(t)}, \ t \in [n] \}) \leq \epsilon$ and there are only $n$ symbols in $\{X_1^{(t)}, \ t \in [n] \}$.

\section{Diamond Networks with Non-Uniform Path Capacities}
\label{sec:diamond}
For the {$N$-relay} diamond network (shown in Fig.~\ref{fig:diamond_unequal}) with unit edge {capacities,} the lower and upper bounds in Theorem~\ref{thm:1} and {Theorem~\ref{thm:2}} match (since all the $N$ edge disjoint paths are {also} vertex {disjoint,} namely $H_e = H_v=N$), and thus the secure capacity equals  $C = \min(M, N)(1- \frac{K}{N})$.

We next consider the case where the edges have {non-uniform capacities.} In particular, we assume that path {$i \in [N]$} that connects the source to the destination through relay $i$ has capacity $C_i$, as depicted in Fig.~\ref{fig:diamond_unequal}.  
In general, even over traditional networks, the problem of security over unequal capacity edges {is everything but easily solvable~\cite{cui2013secure}. The main reason is that}
we need to consider all possible subsets {of edges that Eve may wiretap.}

\begin{thm}\label{thm:3}
	For the diamond network with $M=1$ and $N$ relays as shown in Fig.~\ref{fig:diamond_unequal}, the secure capacity equals
	\begin{align}
\label{eq:capDiamNonun}
	C = \max\limits_{\begin{array}{c} f_i \geq 0, \forall i \\  \sum\limits_{i} f_i = 1 \end{array}} \left[ \sum\limits_{i=1}^{N} f_i C_i - \max\limits_{\begin{array}{c} S \subseteq [N] \\ |S| = K \end{array}} \sum\limits_{i \in S} f_i C_i \right].
	\end{align}
\end{thm}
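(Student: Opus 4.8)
The plan is to match a time-sharing achievable scheme with a cut-set converse, reading $f_i$ as the fraction of the $n$ network uses in which the source and destination beam along path~$i$. Since $M=1$ forces exactly one path to be active end-to-end at each instant, every scheme induces weights with $f_i\ge 0$ and $\sum_i f_i=1$, and conversely any such weights are realizable by time-sharing; this is exactly the feasible set of the outer maximization in~\eqref{eq:capDiamNonun}. Hence it suffices to show that, for each fixed feasible $f$, the best secure rate equals $\sum_i f_i C_i - \max_{|S|=K}\sum_{i\in S} f_i C_i$.

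\textbf{Achievability.} Fix a rational $f$ and $n$ with $n f_i\in\mathbb{Z}$. Using path~$i$ for $n f_i$ slots delivers $n f_i C_i$ symbols, a total of $T=n\sum_i f_i C_i$ across the $N$ vertex-disjoint paths. Because the paths are vertex disjoint and (recall that only the source injects randomness, so each relay forwards a deterministic function of its input) the two edges of path~$i$ carry the same information, any $K$ wiretapped edges expose the symbols of at most $K$ distinct paths; Eve's most informative choice observes $\mu:=n\max_{|S|=K}\sum_{i\in S} f_i C_i$ symbols. I would then reuse the MDS/coset secure-coding idea from the proof of Theorem~\ref{thm:1}, now across a length-$n$ block with per-path budgets $nf_iC_i$: draw $\mu$ key symbols, expand them by an MDS matrix, and mask $T-\mu$ message symbols so that the symbols seen on any $K$ paths depend on the keys alone. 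Decoding and secrecy follow as in the unit-capacity case, giving rate $\tfrac{T-\mu}{n}=\sum_i f_i C_i-\max_{|S|=K}\sum_{i\in S} f_i C_i$; maximizing over feasible $f$ attains the right-hand side of~\eqref{eq:capDiamNonun}.

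\textbf{Converse.} For any code, let $f_i$ be the empirical fraction of uses in which path~$i$ is active and let $T^n_i$ denote the symbols the source sends on path~$i$, so that $H(T^n_i)\le n f_i C_i$ and (relays being deterministic) the destination decodes $W$ from $\{T^n_i\}_{i\in[N]}$. Fano's inequality gives $nR\le I(W;\{T^n_i\}_{i\in[N]})+n\delta_n$. For any $S$ with $|S|=K$,
\[ I(W;\{T^n_i\}_{i\in[N]}) = I(W;\{T^n_i\}_{i\in S}) + I(W;\{T^n_i\}_{i\notin S}\mid\{T^n_i\}_{i\in S}). \]
Eve can tap the source-side edges of the paths in $S$, so~\eqref{eq:security} bounds the first term by $\epsilon$, while the second is at most $H(\{T^n_i\}_{i\notin S})\le\sum_{i\notin S} n f_i C_i$. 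Since this holds for every $S$, choosing $S$ to maximize $\sum_{i\in S} f_i C_i$ yields $R\le\sum_i f_i C_i-\max_{|S|=K}\sum_{i\in S} f_i C_i$ up to vanishing terms; as the scheme's $f$ is feasible, $R$ is bounded by the maximum in~\eqref{eq:capDiamNonun}.

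\textbf{Main obstacle.} The crux is the achievability code, which must secure against every $K$-subset of paths even though different subsets leak different amounts $\sum_{i\in S} f_i C_i$. Fixing the number of keys at the worst-case leakage $\mu$ and invoking an MDS construction over a sufficiently large field is what makes a single code work for all $S$; the delicate part is the dimension counting and verifying that any $K$-path observation is statistically independent of the message. A secondary technical issue is the converse's schedule, namely allowing $f_i$ to be message-dependent and absorbing the $o(n)$ overhead of encoding the active-path indicators, which I would handle by conditioning on a message-independent time-sharing schedule together with a standard rational-approximation/continuity argument.
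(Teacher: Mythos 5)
Your proposal is correct and follows essentially the same route as the paper: time-sharing achievability with an MDS key expansion sized to the worst-case leakage $\max_{|S|=K}\sum_{i\in S} f_i C_i$, and a Fano-based converse that reads off empirical fractions $f_i = n_i/n$, applies the security constraint~\eqref{eq:security} to the maximizing $K$-subset, and bounds the residual conditional entropy by $\sum_{i\notin S} n_i C_i$. The two subtleties you flag (that any $K$ wiretapped edges reduce to $K$ distinct paths since each relay forwards a deterministic function, and that the converse implicitly assumes a message-independent schedule) are in fact glossed over in the paper's proof, so your treatment is, if anything, slightly more careful than the original.
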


\begin{proof}
  \textbf{Achievability:}  It is clear {that} we can transmit $\sum\limits_{i=1}^N f_i C_i$ symbols from the source to the destination, by using for  a {fraction} $f_i$  of time the path with capacity $C_i$. 
Thus, each of the  $N$ outgoing edges from the source (and similarly each of the $N$ incoming edges to the destination) will carry $f_1C_1, \ f_2C_2, \ \ldots \ f_NC_N$ packets, respectively. 
The adversary, {in the worst case wiretaps} $K$ edges, which carry the maximum number of packets. Using a similar encryption scheme as we {designed in Section~\ref{sec:unit_edge}, ensures} a secure rate  $\left[ \sum\limits_{i=1}^{N} f_i C_i - \max\limits_S \sum\limits_{i \in S} f_i C_i \right]$, where $S \subseteq [N], \  |S| = K $. By optimizing over the $f_i$'s we get {that $C$ in~\eqref{eq:capDiamNonun} is achievable.}
%
%
%
%
\\ 
	\textbf{Outer Bound:} 
{Since $M=1$, at any time instant,} the source can transmit on {at most} one of {its $N$} outgoing {edges.} 
{We} let $\{T^t_{e_{i_t}}, \ t \in [n] \}$ be the symbols transmitted over $n$ such instances, where $e_{i_t}$ denotes the edge used in the {$t$-th} instance. 
Some of these symbols will {flow through} $e_1$, some {through} $e_2$, and similarly some {through} $e_N$, where $e_i$ is  the edge of capacity $C_i$ {outgoing} from the source. Let $T_{e_i}$ {denote} the symbols transmitted on $e_i$ in all such instances. Thus, $\{T^t_{e_{i_t}}, \ t \in [n] \} = \{T_{e_i}, i \in [N] \}$.
	Let $|T_{e_i}| = n_i, \ i \in [N]$ such that $\sum\limits_i n_i = n$.
	Because of the edge capacity constraints we have $H(T_{e_i}) \leq n_i C_i, \ \forall i \in [N]$.	
	Now, by Fano's inequality,
	\begin{align*}
	n C & \leq H(W) = H(W) - H(W|\{T^t_{e_{i_t}}, \ t \in [n] \}) \\
	& =I (W;\{T_{e_i}, i \in S \})\!+\!I (W; \{T_{e_i}, i \notin S \} | \{T_{e_i}, i \in S \}) \\ 	
	& \stackrel{(a)}{\leq} \epsilon + \min\limits_{S \subseteq [N], |S| = K} \!\!I (W; \{T_{e_i}, i \notin S \} | \{T_{e_i}, i \in S \}) \\
	& =  \epsilon + \min\limits_{S \subseteq [N], |S| = K}\!\! H (\{T_{e_i}, i \notin S \} | \{T_{e_i}, i \in S \}) \\
	& \leq \epsilon +  \min\limits_{\begin{array}{c}S \subseteq [N] \\ |S| = K \end{array}} \sum\limits_{i \notin S} n_i C_i \\
	& = \epsilon + \sum\limits_{i \in [N]} n_i C_i - \max\limits_{\begin{array}{c}S \subseteq [N] \\ |S| = K \end{array}} \sum\limits_{i \in S} n_i C_i \\
	C & \leq  \frac{\sum\limits_{i \in [N]} n_i C_i - \max\limits_{\begin{array}{c}S \subseteq [N] \\ |S| = K \end{array}} \sum\limits_{i \in S} n_i C_i }{\sum\limits_{i \in [N]} n_i} \\
	& {\implies} C  \leq \sum\limits_{i \in [N]} f_i C_i - \max\limits_{\begin{array}{c}S \subseteq [N] \\ |S| = K \end{array}} \sum\limits_{i \in S} f_i C_i,
	\end{align*}
	where $(a)$ follows from {the} security condition and the choice {of $S$ to have the} tightest bound, and $f_i = {\frac{n_i}{\sum_{i \in [N]} n_i}} \geq 0, {\sum_{i\in[N]} f_i} = 1$. Optimizing over all such choices of $n_i, \ i \in [N]$, we get {that $C$ in~\eqref{eq:capDiamNonun} is an outer bound on the secure capacity. This concludes the proof of Theorem~\ref{thm:3}.}
\end{proof}

 \paragraph*{Example 3} Consider a diamond network with $N=4$, and $C_1=3$, $C_2= 2$, $C_3= 2$ and $C_4 = 1$ and assume $K=1$. If we were to use each {path the same} number of times, we would get  a secure rate of $\frac{5}{4}$.
 In constrast, the optimal scheme from Theorem~\ref{thm:3} uses the first path {twice,}  the second and third three times each, and does not use the last {path,} achieving a secure rate of $\frac{3}{2}$. Thus, we see that {different from} the traditional network, here we might need {to discard} some of the resources.

\section{Conclusions}
\label{sec:Concl}
{We explored} security over 1-2-1 networks {where, since we need to use beamforming and align beams to activate links,} we {cannot} use all the underlying graph links {simultaneously}, but instead {we} can use each {link} for a fraction of time that we can decide. 
Over such networks, we have shown that we can achieve a {secure capacity that in some cases can} be very close to the {unsecure capacity;} we have derived upper and lower bounds for arbitrary unit capacity networks, and exact capacity characterizations for some special classes of networks.
\bibliographystyle{IEEEtran}
\bibliography{isit2018.bib}

\begin{figure*}
	\appendix
\section*{Proof of Lemma~\ref{lem:infotheory}}

	\infotheory*

	\begin{IEEEproof}
		Assume for all choices of $S \subset {[L],} |S| = m$, $ H(\{X_i, \ i \in S^c\}| \{X_i, \ i \in S\}) > \frac{L-m}{L}  H(\{X_i, \ i \in [L]\})$. Then,	
		\begin{align*}
		{L \choose m} H(\{X_i, \ i \in [L]\})& \stackrel{(a)}{=}  \sum\limits_{\begin{array}{c} S \subset [L] \\ |S| = m \end{array}} \left( H(\{X_i, \ i \in S\}) + H(\{X_i, \ i \in S^c\}| \{X_i, \ i \in S\}) \right) \\
		& \stackrel{(b)}{\geq} 	\sum\limits_{\begin{array}{c} S \subset [L] \\ |S| = m \end{array}} \left( \left(\sum \limits_{i \in S} H(X_i | \{X_j, \ j < i\}) \right) + H(\{X_i, \ i \in S^c\}| \{X_i, \ i \in S\}) \right)\\
		& \stackrel{(c)}{=} {L-1 \choose m-1}  \left(\sum \limits_{i \in [L]} H(X_i | \{X_j, \ j < i\}) \right) + 	\sum\limits_{\begin{array}{c} S \subset [L] \\ |S| = m \end{array}} H(\{X_i, \ i \in S^c\}| \{X_i, \ i \in S\}) \\
		& \stackrel{(d)}{=} {L-1 \choose m-1}  H(\{X_i, \ i \in [L]\})  + 	\sum\limits_{\begin{array}{c} S \subset [L] \\ |S| = m \end{array}} H(\{X_i, \ i \in S^c\}| \{X_i, \ i \in S\}) \\
		& \stackrel{(e)}{>}  {L-1 \choose m-1}  H(\{X_i, \ i \in [L]\})  + 	\sum\limits_{\begin{array}{c} S \subset [L] \\ |S| = m \end{array}} \frac{L-m}{L}  H(\{X_i, \ i \in [L]\}) \\
		& = {L-1 \choose m-1}  H(\{X_i, \ i \in [L]\})  + 	{L \choose m} \frac{L-m}{L}  H(\{X_i, \ i \in [L]\}) \\
		& = {L \choose m} \left( \frac{m}{L} H(\{X_i, \ i \in [L]\}) + \frac{L-m}{L} H(\{X_i, \ i \in [L]\})  \right) \\
		& = {L \choose m} H(\{X_i, \ i \in [L]\}),
		\end{align*}
		and {hence} we get a contradiction. Here $(a)$ is because there are ${L \choose m}$ ways of breaking $\{X_i, \ i \in [L]\}$ into two sets of size $m$ and $L-m$, and then it follows from the chain rule of entropy; $(b)$ follows because for any $S \subset [L]$, we can order $\{X_i, \ i \in S\}$ according to their index, and then we use the chain rule of entropy followed by the condition reduces entropy principle; $(c)$ follows because for each $i \in [L]$, there will be ${L-1 \choose m-1}$ choices of $S$ where this $i$ will be part of $S$; $(d)$ follows again from the chain rule of entropy; and $(e)$ follows because of the assumption in the proof that for all choices of $S \subset [n], |S| = m$, $ H(\{X_i, \ i \in S^c\}| \{X_i, \ i \in S\}) > \frac{L-m}{L}  H(\{X_i, \ i \in [L]\})$.      	
	\end{IEEEproof}
\end{figure*}

\end{document}